
\documentclass[conference,letterpaper]{IEEEtran}

\addtolength{\topmargin}{8mm}

\usepackage{amsmath,amssymb,amsfonts}
\usepackage{algorithmic}
\usepackage{graphicx}
\usepackage{textcomp}
\usepackage{xcolor}
\usepackage{changepage} 
\usepackage[bottom=1.1in, left=0.63in, right=0.63in, top=0.75in]{geometry}
\def\BibTeX{{\rm B\kern-.05em{\sc i\kern-.025em b}\kern-.08em
    T\kern-.1667em\lower.7ex\hbox{E}\kern-.125emX}}

\usepackage{amsthm}  
\usepackage{amsmath} 
\allowdisplaybreaks
\usepackage{amssymb} 
\usepackage{graphicx}
\usepackage[colorlinks=true, allcolors=blue]{hyperref}
\usepackage{times}


\newtheorem{definition}{Definition}
\newtheorem{theorem}{Theorem}

\newtheorem{lemma}{Lemma}
\newtheorem{prop}{Proposition}
\newtheorem{remark}{Remark}
\newtheorem{corollary}{Corollary}


\newcommand{\off}[1]{}

\textheight = 9.39in

\begin{document}


\title{Variable Min-Cut Max-Flow\\ Bounds and Algorithms in Finite Regime\vspace{-0.3 cm}}

\off{\author{Anonymous Authors (Double Blind)\vspace{-5.cm}}}

\author{Rivka Gitik and Alejandro Cohen\vspace{-0.0cm}\\
Faculty of Electrical and Computer Engineering, Technion—Israel Institute of Technology, Haifa, Israel,\\Emails: rivkagitik@campus.technion.ac.il and alecohen@technion.ac.il\vspace{-0.6 cm}}


\maketitle
\begin{abstract}
    The maximum achievable capacity from source to destination in a network is limited by the min-cut max-flow bound; this serves as a converse limit. In practice, link capacities often fluctuate due to dynamic network conditions. In this work, we introduce a novel analytical framework that leverages tools from computational geometry to analyze throughput in heterogeneous networks with variable link capacities in a finite regime. Within this model, we derive new performance bounds and demonstrate that increasing the number of links can reduce throughput variability by nearly $90\%$. We formally define a notion of network stability and show that an unstable graph can have an exponential number of different min-cut sets, up to $O(2^{|E|})$. To address this complexity, we propose an algorithm that enforces stability with time complexity $O(|E|^2 + |V|)$, and further suggest mitigating the delay-throughput tradeoff using adaptive rateless random linear network coding (AR-RLNC).
\end{abstract}

\vspace{-0.1cm}
\section{Introduction}
The need to account for uncertainty is an inherent characteristic of many modern systems, particularly in the domains of wireless communication networks, autonomous systems, including self-driving vehicles, drone swarms, cooperative navigation, and autonomous intersection management. In these contexts, fluctuating transmission rates, unreliable links, and dynamic environments present significant challenges in ensuring the robustness and reliability of the system. Consequently, extensive research has focused on developing models and algorithms that can effectively capture, mitigate, and adapt to this variability, with the goal of improving stability, efficiency, and overall system performance \cite{huang2019evaluation, zaitseva2023review, rutkowski2016path,tang2023uncertainty}.

To address these challenges from a new perspective, we draw inspiration from computational geometry (CG), where uncertainty and variability have long been central topics of study. One of the key intersections between networking and CG lies in the use of graph-based representations. Networks are naturally modeled as graphs \cite{li2003linear,ref3,li2011linear}, and graphs also play a central role in fundamental CG problems such as Delaunay triangulations and line arrangements \cite{de2000computational}. In CG, extensive techniques have been developed to analyze structural variability, such as uncertainty models, probabilistic frameworks, and $\epsilon$-geometry \cite{agarwal2009indexing, salesin1989epsilon, gitik2022computational, gitik2021euclidean, gitik2021voronoi}. In this work, we apply these variability-focused methods to the analysis of fluctuating link behavior in networked systems.

To the best of our knowledge, this is the first work to investigate the min-cut max-flow problem in networks modeled as graphs with fluctuating link capacities in a finite regime. Our aim is to define the min-cut max-flow under this variability model formally and to compute it while addressing four key objectives: (1) maximizing throughput, (2) minimizing in-order delivery delay, (3) limiting the extent of variability mitigation, and (4) ensuring computational efficiency.

The paper is organized as follows. Section~\ref{relatedWork} provides background and reviews related work. Section~\ref{systemSection} introduces the system model, outlines its key features, and defines the setting for a single link, along with the core definitions used throughout the paper. Section~\ref{ThroughputSection} extends the analysis from a single link to a general network and derives lower and upper bounds on achievable throughput. In Section~\ref{stabilitySection}, we examine the limitations of these bounds and formally define stability as a network configuration in which throughput remains robust, i.e., when the network topology consistently preserves the same set of bottleneck links despite the variability introduced by the model.  Section~\ref{algorithmSection} discusses the challenges in defining stability and introduces an algorithm for enforcing it in practice. Section~\ref{achievabilitySection} introduces an adaptive rateless random linear network coding (AR-RLNC) scheme that can achieve either the throughput bounds or a more efficient, stability-compliant bounds. The scheme can also be adapted to implement the stability enforcement algorithm. Finally, Section~\ref{ConclusionsSection} concludes the paper and outlines directions for future research.
\section{Related Work and Novel Approach}\label{relatedWork}
Min-cut max-flow is a fundamental concept in graph theory and has been applied across various domains, including computer networks \cite{hwang1995min,nozawa1990max}, neural networks \cite{rizzi2002adaptive}, and computer vision tasks \cite{kohli2010dynamic} such as image segmentation \cite{zeng2008topology,boykov2001interactive} and energy minimization \cite{boykov2004experimental}. In the context of such applications, particularly in networked systems, a key challenge lies in accurately modeling link capacities under uncertainty. To address this, a variety of models and representations have been proposed, which can be broadly categorized into probabilistic, optimization-based, and geometrical frameworks.

Probabilistic approaches evaluate the distribution of the minimum cut capacity of graphs. Fujii and Wadayame \cite{ref4,ref5,ref6} present a coding theoretic approach for evaluating the accumulate distribution of the minimum cut capacity of random graphs. They drive the lower bound for the probabilistic capacity. The main drawback of probabilistic models is that they always underestimate the worst-case variability. 
For the optimization models, Bertsimas and Sim \cite{optimizationtBerstsimas} address data uncertainty for discrete optimization and network flow problems in a way that allows controlling the degree of conservatism of the solution. Chauhan et al. \cite{robustChaunhan} discuss a robust network interdiction problem considering uncertainties in arc capacities and resource consumption. The problem involves an adversary seeking to maximize the flow of commodity through the network, and an interdictor whose objective is to minimize this flow.
The main drawback of the optimization approach is that it often tends to become NP-hard.
For a geometrical model, Minoux \cite{polyhedralMinox} determines a robust maximum flow value in a network with uncertain link capacities taken in a polyhedral uncertainty set. Formulating a problem as a geometric one can be challenging in certain scenarios.  

Numerous studies have chosen the random graph approach \cite{ref4,ref5,ref6}. A random graph obtained by starting with a set of $n$ isolated vertices and adding successive edges between them subject to some probability. Wang et al. \cite{wang2007maximum} study the random graph with Bernoulli distributed weights and show the statistical property of the maximum flow. Karger \cite{karger1994random} shows that the sparse graph, which arises when randomly sampling the edges of a graph, will accurately approximate the value of all cuts in the original graph.

Closely related questions that emerge in the context of variability include stability, robustness, and reliability. Klimm et al. \cite{robustnessklimm} call a network topology robust if the maximal node potential needed to satisfy a set of demands never increases when demands are decreased and proposed an efficient algorithm to test robustness. Ball 
\cite{Reliabilityball}
 uses a stochastic network model in which each arc can be operative or failed. 
The network reliability analysis problem is to compute the probability that there exist operating paths from a source node to a set of destination nodes. He proves that this problem is NP-hard. Chen and Lin \cite{chen2010approximate} propose an approximate algorithm for solving network reliability analysis.  

Classical approaches in information theory for general meshed networks rely on asymptotic analysis, which assumes large blocklengths. However, such methods often fail to meet practical delay constraints. Some existing models do not provide performance bounds for heterogeneous networks \cite{dana2006capacity}, while others lack a detailed characterization of link variability \cite{wang2007maximum,karger1994random}. In addition, current literature does not offer a practical algorithm for approaching network capacity bound in a finite regime \cite{ref4,ref5,ref6}. Notably, none addresses how to achieve a throughput bound in heterogeneous networks with fluctuations while maintaining an optimal transmission rate under rateless coding.

In this paper, we propose a novel approach that considers the best-case, worst-case, and average-case scenarios, rather than relying on a probabilistic framework. In the proposed model, the link remains invariant throughout the scenario, while the link throughput exhibits variability. A disappearing link is represented by assigning it a throughput of zero, i.e., an erasure probability of one. In this sense, our model offers a broader perspective than the random graph approach, as it generalizes both the disappearance and the addition of links through changes in link throughput, making the model suited to heterogeneous networks.

\section{System Model}\label{systemSection}
This section introduces the modeling framework and provides the foundational definitions used in subsequent sections.

We consider a slotted communication setting and model the transport layer using the binary erasure channel (BEC) as a suitable abstraction to model the transport layer \cite{dias2023sliding,esfahanizadeh2024benefits}.
The forward channel is assumed to be unreliable, whereas the reverse (feedback) channel is considered reliable. Feedback errors, as exemplified in \cite{malak2019tiny}, are left for future work.

\begin{definition}
A \emph{network} is defined as an undirected graph $G = (V, E)$, where $V$ denotes the set of nodes and $E \subseteq V \times V$ denotes the set of undirected edges, also referred to as links. A single node is denoted by $v_i \in V$, and a single link is denoted by $e_i \in E$. Each link $e_i$ also serves as a reliable feedback link. The average erasure probability of the forward link $e_i$ is denoted by $p_i$. The variance associated with the forward link $e_i$ is denoted by $var_i$. 
\end{definition}

Let the round-trip time (RTT) of link $e_i$ be denoted by $\text{RTT}_i$. We define $t$ as the current time slot and let $t^- = t-\text{RTT}_i$ represent the time slot one $\text{RTT}_i$ earlier. For simplicity, we assume that the number of packets to be transmitted in each $e_i$ is at least equal to the $\text{RTT}_i$ in time slots. 

\begin{prop} \label{lem:link}
The uncertainty in the data rate of forward link $e_i$ at time $t$ is quantified using upper, lower, and mean bounds, defined as follows,
    \begin{align}
        r_i^{max}(t)&\triangleq r_i(t^-)+\sigma\underbrace{\frac{\sqrt{var_i(t)}}{\text{RTT}_i}}_{a} \label{maxRate}\\    
        r_i^{min}(t)&\triangleq r_i(t^-)-\sigma\underbrace{\frac{\sqrt{var_i(t)}}{\text{RTT}_i}}_{a} \label{minRate}\\[-0.4cm]
        r_i^{mean}(t)&\triangleq r_i(t^-), \notag
    \end{align}
    where the parameter $\sigma$ is the tunable standard deviation factor parameter, using the so-called $68 - 95 - 99.7$ rule \cite{wooditch2021normal}.
\end{prop}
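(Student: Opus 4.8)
The plan is to justify the three expressions as a confidence characterization of the random forward rate rather than to derive them ex nihilo, since the statement itself fixes them by definition (the $\triangleq$ symbol). First I would set up the underlying statistical model consistent with the BEC abstraction: on link $e_i$ each transmitted packet is erased independently with probability $p_i$, so the per-slot success indicator is Bernoulli. Over one round-trip window, which by the standing assumption contains at least $\text{RTT}_i$ packets, the number of successful deliveries is a sum of these indicators, and the empirical rate is this count normalized by the window length $\text{RTT}_i$.

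Next I would pin down the mean term. Because the reverse channel is reliable, the feedback reports the rate realized one RTT earlier exactly; under the quasi-stationarity assumption that the link statistics vary slowly relative to a single RTT, this realized value $r_i(t^-)$ is the natural unbiased point estimate of the current mean rate, which is precisely the definition $r_i^{mean}(t) \triangleq r_i(t^-)$. I would then derive the spread term $a$: writing the rate as the windowed success count divided by $\text{RTT}_i$ and letting $var_i(t)$ denote the variance of that count, the variance of the rate estimate scales as $var_i(t)/\text{RTT}_i^2$, so its standard deviation is $\sqrt{var_i(t)}/\text{RTT}_i$, exactly the quantity underbraced as $a$.

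Finally, invoking the Central Limit Theorem, the normalized sum of the per-slot Bernoulli indicators is approximately Gaussian, so the $68 - 95 - 99.7$ rule applies: the true rate lies in the interval $[\,r_i(t^-)-\sigma a,\; r_i(t^-)+\sigma a\,]$ with confidence $68\%$, $95\%$, or $99.7\%$ for $\sigma\in\{1,2,3\}$. Identifying the two endpoints with $r_i^{max}(t)$ and $r_i^{min}(t)$ yields \eqref{maxRate} and \eqref{minRate}, while the center gives the mean bound.

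The main obstacle I anticipate is the finite-regime justification of the Gaussian step. The $68 - 95 - 99.7$ rule is exact only for a normal law, whereas for a finite RTT window the normalized Bernoulli sum is only approximately normal, and correlations between consecutive slots (bursty erasures) can further distort both the variance bookkeeping and the tail probabilities. I would therefore either treat the Gaussian and $\sigma$-rule as an explicit modeling postulate, consistent with the paper's stated use of the empirical rule, or, for a sharper statement, replace the nominal confidence levels with Berry--Esseen or Chernoff-type bounds that remain valid for the finite number of packets per RTT. The one bookkeeping point I would double-check is that the variance normalization is tied to the observed windowed-count statistics rather than to the per-slot statistics, so that the factor $1/\text{RTT}_i$ (and not $1/\sqrt{\text{RTT}_i}$) is the correct scaling in the definition of $a$.
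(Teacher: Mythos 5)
Your proposal matches the paper's own argument: both model the per-slot success on $e_i$ as Bernoulli with parameter $1-p_i$ (so $r_i^{mean}(t)=r_i(t^-)=1-p_i$), the windowed count over one round trip as binomial with $var_i = \text{RTT}_i\, p_i(1-p_i)$, normalize the standard deviation by $\text{RTT}_i$ to obtain the underbraced term $a$, and invoke the $68$--$95$--$99.7$ rule to set the $\sigma$-scaled interval. Your explicit variance bookkeeping for the $1/\text{RTT}_i$ (rather than $1/\sqrt{\text{RTT}_i}$) scaling and your caveat about the finite-blocklength Gaussian approximation correspond, respectively, to the paper's brief normalization remark and to its appeal to the BEC dispersion result of Polyanskiy et al.\ with the $O(1)$ term neglected.
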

\begin{proof}
The proof follows by considering the classical asymptotic regime to capture the average behavior of the channels \cite{shannon1948mathematical,shannon1956zero,cover1999elements}, and the finite regime considering the second moment of the noise to capture the variations of the channels \cite{polyanskiy2010channel,polyanskiy2010channel1,polyanskiy2011feedback} in practical modern communication systems. The available information from the past transmissions is captured by $r_i(t^-)$, which reflects the knowledge of the link prior to receiving new feedback \cite[Section V.C]{cohen2020adaptiveS}, \cite{cohen2020adaptive}. This feedback becomes available only after one $\text{RTT}_i$, introducing a delay during which the outcomes of recent transmissions remain unknown. 
The uncertainty within this feedback delay interval is quantified by the variance $var_i(t)$. Consequently, since the feedback channel is considered reliable, $r_i(t^-)$ is modeled as a Bernoulli process, and thus 
\begin{align*}
    r_i(t^-) = 1 - p_i.
\end{align*} 
Alongside, $var_i(t)$ is modeled as a binomial process comprising $\text{RTT}_i$ trials with success probability $1 - p_i$. Therefore, $var_i=\text{RTT}_i \cdot p_i(1-p_i)$.

For the underbraced term $a$ in equations \eqref{maxRate} and \eqref{minRate}, the standard deviation is divided by $\text{RTT}_i$ to normalize the $\text{RTT}_i$ transmission trials. According to \cite{polyanskiy2009dispersion}, we neglect the $O(1)$ term of the BEC.
\end{proof}

Note that for $r_i^{max}(t)$ and $r_i^{min}(t)$, setting $\sigma=1$ in equations \eqref{maxRate} and \eqref{minRate} results in a fixed value, independent of the $68 - 95 - 99.7$ rule. We will either explicitly refer to $\sigma$ or omit it, depending on the context.

As a consequence of Proposition~\ref{lem:link}, under the  $68 - 95 - 99.7$ rule, we assume that the current data rate $r_i(t)$ fluctuates within a time-varying interval, i.e., $r_i(t) \in [r_i^{min}(t), r_i^{max}(t)]$, during the $\text{RTT}_i$ period in which feedback has not yet been received. We use the notation $r_i(t_l)$ to denote a specific possible value of $r_i(t)$ at time $t$, where $l$ indexes that particular value within the interval. For notational simplicity, we omit the explicit time dependency and refer to the bounds as $r_i^{min}, r_i^{mean}$ and $r_i^{max}$. 

The following definitions provide the foundation for extending the model from individual links to a full general network.

\begin{definition} \label{def:cut}
    Let $S \subset V$ and $D \subset V$ be disjoint subsets of nodes such that $S \cap D = \emptyset$, with a source node $s \in S$ and a destination node $d \in D$. A \emph{cut} between $s$ and $d$ is defined as a subset of edges $C \subseteq E$, where each edge $e_i = (v_a, v_b) \in C$ satisfies the condition that $v_a \in S$ and $v_b \in D$. The size of a cut, denoted $|C|$, is defined as the number of edges it contains.
\end{definition}

 The number of distinct cuts is $O(2^{|V|-2})$, since each node $v_i \in V$, with $v_i \neq s$ and $v_i \neq d$, can either belong to the group containing $s$ or to the group containing $d$.

\begin{definition} \label{def:realization}
Let $r_i(t)$ denote the rate of forward link $e_i \in E$ at time slot $t$, and let $r_i(t_l)$ be a particular realization indexed by $l$. Then, $G(t_l)$ and $C(t_l)$ denote particular sets of realized rates $\{ r_i(t_l) \}$ assigned to the edges in the graph $G$ and the cut $C$, respectively, at time slot $t$.
\end{definition}

\begin{definition}
Let $G(t_l)$ denote a network and let $C(t_l)$ be a cut in $G(t_l)$, as described in Definitions~\ref{def:cut} and~\ref{def:realization}. The \emph{minimum cut}, denoted by $MC(t_l)$, is the cut with the minimum total rate across its edges, i.e., 
$$MC(t_l) \triangleq \arg\min_{C(t_l)} \left\{ \sum_{e_i \in C(t_l)} r_i(t_l) \right\}.$$
\end{definition}

Our goal is to derive performance bounds for a network operating under the described model, to characterize network variability, and to offer an adaptive rateless coding scheme that meets these bounds while maximizing throughput and minimizing in-order delivery delay, as given in the following definitions.

\begin{definition}
\emph{Throughput} is defined as the total amount of information delivered (in bits per second) in-order to the receiver through the forward links of the network $G$, divided by the total number of transmissions initiated by the sender.    
\end{definition}

\begin{definition}
\emph{In-order delivery delay} is defined as the time gap between the initial transmission of an information packet (in encoded form) by the sender and the moment it is successfully delivered and decoded in-order at the receiver, such that all preceding packets in the sequence have also been delivered and decoded.
\end{definition}

\section{Network Throughput Bounds } \label{ThroughputSection}

In this section, we extend the setting from a single link to a network. Let $G=(V,E)$ represent a network. We normalize the RTT of the network links relative to the link with the longest RTT, denoted by $e_{long} \in E$. For each link $e_i \in E$
the $\text{RTT}_i$ is divided by $\text{RTT}_{long}$, such that each edge is associated with the factor $\alpha_i$, where $\alpha _i = 1$ for $e_{long}$ and $\forall e_i \ne e_{long}, \alpha_i < 1$.

\begin{theorem} \label{the:cut}
The aggregate rate across a cut $C$ in the network $G$, at time $t$, is determined by the rates of the individual forward links that form the cut. The link rates are independent, and therefore, we obtain
\vspace{-0.2cm}
\begin{adjustwidth}{-0.05em}{0em}
\begin{align}
\small    w^{max}(C) &\triangleq  \sum_{e_i \in C}\underbrace{\frac{r_i(t^-)}{\alpha_i}}_{a} +\underbrace{{\frac{\sqrt{\sum_{e_i \in C}var_i(t)}}{\sum_{i=1}^{i=|C|} \text{RTT}_i}}}_{b} \label{cutSumMax} 
    \leq \sum_{e_i \in C}r^{max}_i \\ 
\small    w^{min}(C) &\triangleq  \sum_{e_i \in C}\underbrace{\frac{r_i(t^-)}{\alpha_i}}_{a} -\underbrace{\frac{\sqrt{\sum_{e_i \in C}var_i(t)}}{\sum_{i=1}^{i=|C|} \text{RTT}_i}}_{b} \label{cutSumMin} 
    \geq \sum_{e_i \in C}r^{min}_i  \\
\small    w^{mean}(C) &\triangleq  \sum_{e_i \in C}\frac{r_i(t^-)}{\alpha_i}. \notag
\end{align}
\end{adjustwidth}
\end{theorem}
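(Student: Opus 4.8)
The plan is to construct the three aggregate quantities directly from the per-link bounds of Proposition~\ref{lem:link} and then reduce both displayed inequalities to a single subadditivity estimate on the variability term. First I would apply the RTT normalization introduced at the start of the section: each link contributes its mean rate scaled by $1/\alpha_i$, so summing the per-link mean bounds over the edges of the cut gives $w^{mean}(C)=\sum_{e_i\in C} r_i(t^-)/\alpha_i$, which is term $a$. This is the purely linear part and needs no inequality.

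Next I would assemble the variability term $b$. Invoking the independence of the link rates asserted in the theorem, the variance of the aggregate flow across the cut is the sum of the per-link variances, $\sum_{e_i\in C}var_i(t)$; taking its square root and normalizing by the pooled round-trip time $\sum_{i=1}^{|C|}\text{RTT}_i$ yields term $b$. Adding it to, respectively subtracting it from, $w^{mean}(C)$ produces $w^{max}(C)$ and $w^{min}(C)$.

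The heart of the argument is the two bounding inequalities. Provided the right-hand sides $r_i^{max},r_i^{min}$ carry the same $1/\alpha_i$ normalization on their mean parts, the mean terms cancel and both inequalities collapse to the single claim
\[
\frac{\sqrt{\sum_{e_i\in C}var_i(t)}}{\sum_{i=1}^{|C|}\text{RTT}_i}\;\le\;\sum_{e_i\in C}\frac{\sqrt{var_i(t)}}{\text{RTT}_i}.
\]
I would establish this in two elementary steps. Subadditivity of the square root on nonnegative arguments gives $\sqrt{\sum_i var_i(t)}\le\sum_i\sqrt{var_i(t)}$, which follows from $(\sqrt{x}+\sqrt{y})^2\ge x+y$ together with induction on $|C|$. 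Then, since the pooled denominator dominates every individual one, $\sum_j\text{RTT}_j\ge\text{RTT}_i$, dividing termwise yields $\sum_i\sqrt{var_i(t)}\big/\sum_j\text{RTT}_j\le\sum_i\sqrt{var_i(t)}/\text{RTT}_i$. Chaining the two estimates proves the inequality, whence $w^{max}(C)\le\sum_{e_i\in C}r_i^{max}$ and $w^{min}(C)\ge\sum_{e_i\in C}r_i^{min}$ follow simultaneously.

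The main obstacle I anticipate is bookkeeping rather than a deep estimate: I must keep the $\alpha_i$ normalization consistent between the aggregate quantities and the per-link bounds so that the mean contributions genuinely cancel and the comparison is truly between the variability terms. I would also note that the inequality is tight exactly when the cut consists of a single link, where both sides coincide; this degenerate equality case is the quantitative statement underlying the paper's claim that pooling additional links shrinks the relative throughput fluctuation.
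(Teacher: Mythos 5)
Your construction of the three aggregate quantities is exactly the paper's: term $a$ is the RTT-normalized mean rate summed over the cut, and the numerator of term $b$ follows from the additivity of variances of independent binomial variables, with the pooled $\sum_i \text{RTT}_i$ in the denominator matching the normalization of Proposition~\ref{lem:link}. Where you genuinely diverge is that you then actually \emph{prove} the two displayed inequalities, reducing both to
\[
\frac{\sqrt{\sum_{e_i\in C}var_i(t)}}{\sum_{i=1}^{|C|}\text{RTT}_i}\;\le\;\sum_{e_i\in C}\frac{\sqrt{var_i(t)}}{\text{RTT}_i}
\]
via subadditivity of the square root together with the observation that the pooled denominator dominates each individual $\text{RTT}_i$; both steps are correct. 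The paper's proof stops at justifying the construction of terms $a$ and $b$ and never argues the inequalities at all, so your version is strictly more complete, and it isolates precisely the quantitative mechanism behind the paper's later claims (Lemma~\ref{Lemma:maxLink}, Fig.~\ref{fig:cut_rate}) that pooling links shrinks the fluctuation. Two bookkeeping caveats that your hedge about $\alpha_i$ correctly anticipates but does not fully resolve: the per-link bounds in Proposition~\ref{lem:link} carry a tunable factor $\sigma$ on the deviation term that is absent from $w^{max}(C)$ and $w^{min}(C)$, so your chained estimate is valid only for $\sigma\ge 1$ (the paper implicitly takes $\sigma=1$ here, as it notes after Proposition~\ref{lem:link}); and the convention that $r_i^{max}$ and $r_i^{min}$ on the right-hand sides are themselves RTT-normalized is only declared by the paper \emph{after} the theorem, so your ``provided the normalization is consistent'' proviso is exactly the assumption needed for the mean terms to cancel.
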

\begin{proof}
    The underbraced term $a$ in equations~\eqref{cutSumMax} and~\eqref{cutSumMin} is the normalized transmission rate, ensuring that links with shorter $\text{RTT}_i$ values achieve proportionally higher effective rates. The numerator expression of the underbraced term $b$ in equations~\eqref{cutSumMax} and~\eqref{cutSumMin} follows the property that the variance of a sum of independent Binomial random variables is equal to the sum of their individual variances \cite[p.~53]{ross2014introduction}. The denominator serves to normalize over the total number of transmission trials $\sum_{i=1}^{i=|C|} \text{RTT}_i$, consistent with the normalization used in \eqref{maxRate} and \eqref{minRate}.
\end{proof}

For the remainder of this paper, all references to $r_i^{max}$, $r_i^{min}$ and $r_i^{mean}$, refer to their values after normalization with respect to the RTT.

Equations \eqref{cutSumMax} and \eqref{cutSumMin} reveal an interesting pattern, as the number of links in a cut increases, the magnitude of the fluctuation tends to decrease. This can be illustrated with a simple example. Consider three links, $e_1,e_2,e_3$, with rates $r_1=0.8$, $r_2=r_3=0.4$, $\text{RTT}_1=\text{RTT}_2=\text{RTT}_3=1$ and variances $var_1=1\cdot0.8\cdot0.2 = 0.16, var_2=var_3=1\cdot0.4\cdot0.6=0.24$. Suppose we wish to transmit data at a rate of 0.8 bits per second. We have two options, either use only link $e_1$, resulting in a variability of $\sqrt{0.16}/1= 0.4$, or split the traffic between $e_2$ and $e_3$, yielding a variability of $\sqrt{2\cdot0.24}/2\approx0.35 < 0.4$. Fig.~\ref{fig:cut_rate} presents a numerical illustration of this pattern as a function of the number of links in a cut $C$, evaluated for various RTT values. The observed reduction in variability, up to $90\%$, is based on this illustration and is discussed in more detail in the Appendix \ref{appendix:exmple}.

\begin{figure}
    \centering
    \includegraphics[width=1\columnwidth]{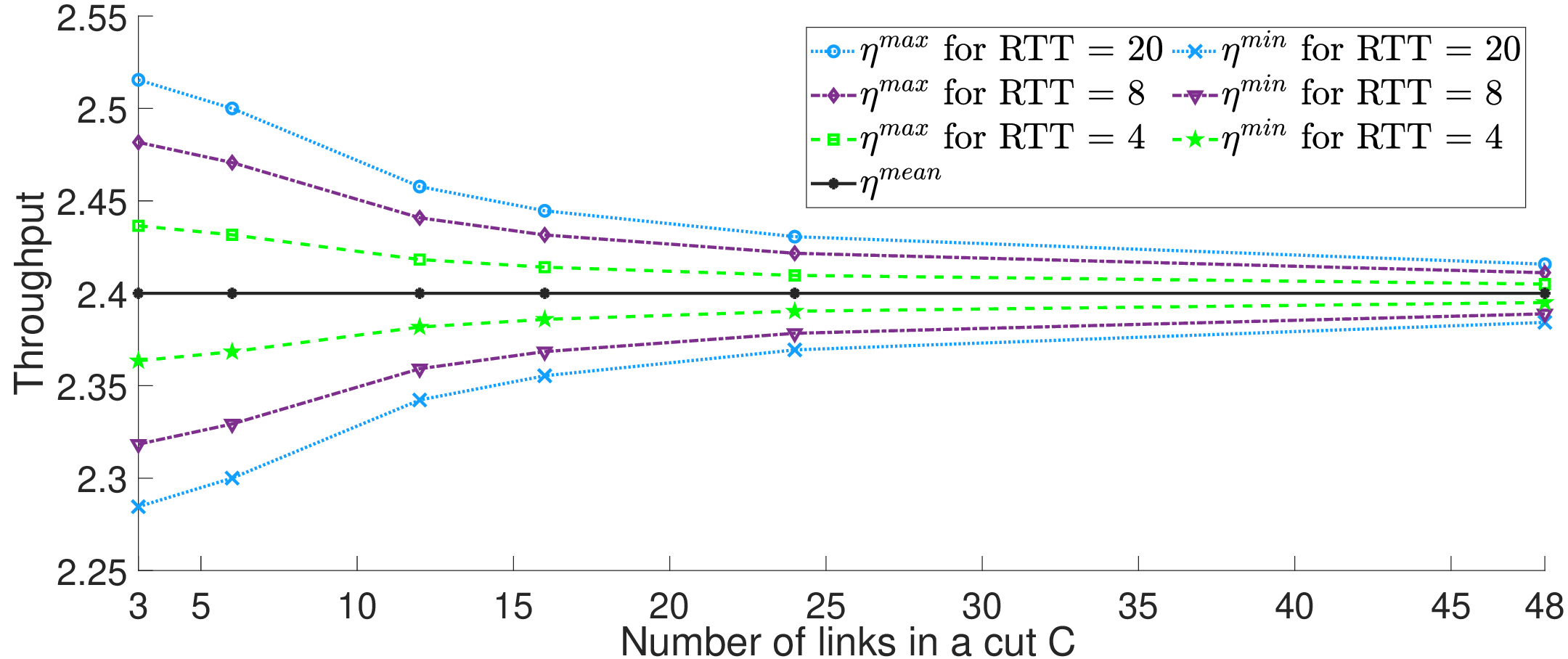}
    \caption{Throughput as a function of links in a cut $C$ for different RTT values. The target data transmission rate is $2.4$ bits per second. The RTT assigned to each link is 4, 8, and 20 for the green, purple and blue curves, respectively. The solid line is $\eta^{mean}$, while the upper and lower dashed lines illustrate $\eta^{max}$ and $\eta^{min}$, respectively, for the different RTT values. As the number of links increases or the RTT decreases, throughput fluctuations diminish, leading to tighter bounds and enhanced performance guarantees.}
    \label{fig:cut_rate}
    \vspace{-0.5cm}
\end{figure}

\begin{lemma}\label{Lemma:maxLink}
    Given a fixed rate budget for an application across a network cut, and assuming that all links across the cut have identical erasure probability $p_i$, the optimal strategy is to select the maximum number of links whose combined usage satisfies the rate constraint.
\end{lemma}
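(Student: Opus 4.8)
The plan is to read the claim as an optimization: among all ways of realizing a fixed rate budget $R$ across the cut using links of a common erasure probability $p$, choose the configuration that maximizes the worst-case guaranteed rate. Since Theorem~\ref{the:cut} gives $w^{min}(C)=w^{mean}(C)-b(C)$ with the fluctuation term $b(C)=\sqrt{\sum_{e_i\in C}var_i}\,\big/\,\sum_{e_i\in C}\text{RTT}_i$, and the budget pins the mean to $w^{mean}(C)=R$, maximizing $w^{min}(C)$ (equivalently, shrinking the band $w^{max}-w^{min}=2b(C)$) is exactly the problem of minimizing $b(C)$ subject to the cut carrying rate $R$. So the first step is to replace ``optimal strategy'' by ``the selection of links that minimizes $b(C)$.''

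Second, I would use the identical-$p$ hypothesis to collapse the problem to a single integer variable. A common $p$ forces a common per-link capacity $1-p$ and, crucially, a common intrinsic variance $var=\text{RTT}\cdot p(1-p)$ that is a property of the link quality rather than of its instantaneous load; assuming, or normalizing via the factors $\alpha_i$ to, a common RTT $\tau$, the only remaining freedom is the number $n$ of links over which $R$ is distributed. Substituting $n$ identical terms yields $b(n)=\sqrt{n\tau p(1-p)}\big/(n\tau)=\sqrt{p(1-p)/(n\tau)}$, while feasibility of carrying the budget requires each active link to stay within capacity, i.e. $n\ge R/(1-p)$.

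Third, I would observe that $b(n)$ is strictly decreasing in $n$: the numerator $\sqrt{\sum var_i}$ grows only like $\sqrt{n}$, whereas the denominator $\sum\text{RTT}_i$ grows like $n$, so their ratio falls off like $1/\sqrt{n}$. This is precisely the $\ell_2$-versus-$\ell_1$ phenomenon illustrated by the worked example preceding Fig.~\ref{fig:cut_rate}, now made exact under the identical-link assumption. Because $b(n)$ decreases monotonically while every feasible $n\ge R/(1-p)$ meets the budget, the minimizer is the largest admissible $n$, i.e. splitting the traffic over the maximum number of links, which is the assertion of the lemma.

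The step I expect to be the main obstacle is the monotonicity argument itself, because increasing $n$ inflates numerator and denominator simultaneously; one must verify that the linear growth of $\sum\text{RTT}_i$ strictly dominates the square-root growth of $\sqrt{\sum var_i}$ rather than merely keeping pace. Closely related, and equally delicate, is pinning down the right notion of per-link variance: if one instead lets the variance depend on the instantaneous share $R/n$ through a factor of the form $(R/n)(1-R/n)$, then $b(n)$ can fail to be monotone near the minimal feasible $n$, so the proof must commit to the quality-based (intrinsic) variance of Proposition~\ref{lem:link} and argue that only genuinely active links enter the sums. Handling heterogeneous RTTs consistently through the $\alpha_i$ normalization, so that the budget constraint and $b(C)$ live on the same scale, is the remaining bookkeeping.
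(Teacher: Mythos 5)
Your proposal is correct and follows essentially the same route as the paper: both reduce the claim to showing that the uncertainty-interval width $2\sqrt{\textstyle\sum var_i}\big/\textstyle\sum \text{RTT}_i = 2\sqrt{p(1-p)}\big/\sqrt{x}$ (with $x=\sum_{e_i\in C}\text{RTT}_i$ and the common-$p$ assumption factoring the variance out) is strictly decreasing in $x$, so the maximum admissible number of links minimizes the fluctuation. The paper works directly with $x$ rather than assuming a common RTT, but this is only a bookkeeping difference.
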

\begin{proof}
    Following Section~\ref{systemSection}, we have $\sum_{e_i \in C} \mathrm{var}_i(t) = \sum_{e_i \in C} \text{RTT}_i \cdot p_i(1 - p_i)$. Let $ x = \sum_{e_i \in C} \text{RTT}_i $. Then, the size of the uncertainty interval is 
    $$\small 2 \cdot \frac{\sqrt{x \cdot p_i(1 - p_i)}}{x}.$$
    Since $ 0 \leq p \leq 1 $, it follows that $ p_i(1 - p_i) \in [0, 1/4] $. And
    $$\small 2 \cdot \frac{\sqrt{x \cdot p_i(1 - p_i)}}{x} = 2 \cdot \sqrt{p_i(1 - p_i)} \cdot \frac{1}{\sqrt{x}}.$$
    As $ x \geq 1 $ by definition, and $ \sqrt{p_i(1 - p_i)} \geq 0 $, it follows that the expression is nonnegative. Moreover, the function $ \frac{1}{\sqrt{x}} $ is strictly decreasing as $ x \rightarrow \infty $. Thus, unless $ p_i$ is $0$ or $1$, the size of the uncertainty interval strictly decreases as $x$ increases.
\end{proof}
As established in Lemma~\ref{Lemma:maxLink}, utilizing a larger number of links enables the system to provide better performance guarantees, as the resulting bounds become tighter and more accurate in a practical finite regime \cite{cohen2021bringing,esfahanizadeh2024benefits}. The following corollary is a direct consequence of this result.
\begin{corollary}
For a given cut $C$, the maximum achievable throughput $w^{\text{max}}(C)$ converges to the mean throughput $w^{\text{mean}}(C)$ as the $\text{RTT}_i$ or the number of links in the cut increases. Specifically, $$w^{\text{max}}(C) \to w^{\text{mean}}(C),$$ with a convergence rate of $O\left(\frac{1}{\sqrt{\sum_{e_i \in C} \text{RTT}_i}}\right)$.
\end{corollary}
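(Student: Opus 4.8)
The plan is to work directly with the closed-form expressions from Theorem~\ref{the:cut} and reduce the problem to bounding a single scalar quantity. First I would subtract the two definitions: since $w^{\text{max}}(C)$ and $w^{\text{mean}}(C)$ share the identical leading term $\sum_{e_i \in C} r_i(t^-)/\alpha_i$ (the underbraced term $a$), their difference collapses to exactly the uncertainty term $b$, namely
$$w^{\text{max}}(C) - w^{\text{mean}}(C) = \frac{\sqrt{\sum_{e_i \in C} var_i(t)}}{\sum_{i=1}^{|C|} \text{RTT}_i}.$$
This isolates the entire convergence behavior in term $b$, so the remaining task is purely to estimate this ratio.

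Next I would substitute the variance model established in Section~\ref{systemSection}, $var_i = \text{RTT}_i \cdot p_i(1-p_i)$, and introduce the aggregate $x \triangleq \sum_{e_i \in C} \text{RTT}_i$, exactly as in the proof of Lemma~\ref{Lemma:maxLink}. Using the elementary bound $p_i(1-p_i) \le 1/4$ valid for every $p_i \in [0,1]$, the numerator is controlled by $\sqrt{x/4}$, which yields
$$w^{\text{max}}(C) - w^{\text{mean}}(C) \le \frac{\sqrt{x/4}}{x} = \frac{1}{2\sqrt{x}} = \frac{1}{2\sqrt{\sum_{e_i \in C} \text{RTT}_i}}.$$
Since the gap is also nonnegative (numerator and denominator are both positive), this sandwiches it and immediately delivers the claimed order $O\big(1/\sqrt{\sum_{e_i \in C}\text{RTT}_i}\big)$.

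Finally I would close the argument by observing that $x = \sum_{e_i \in C}\text{RTT}_i$ grows without bound under either of the two regimes named in the statement: enlarging the individual $\text{RTT}_i$ values increases $x$, and adding more links to the cut appends further positive $\text{RTT}_i$ terms to the sum. In both cases $1/(2\sqrt{x}) \to 0$, so $w^{\text{max}}(C) \to w^{\text{mean}}(C)$. I do not anticipate a genuine obstacle here; the only point requiring care is that the two stated mechanisms (larger RTT, more links) must both be phrased as monotone growth of the single quantity $x$, which is precisely what makes the uniform $1/\sqrt{x}$ rate legitimate rather than merely a pointwise limit.
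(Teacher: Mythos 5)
Your proof is correct and follows essentially the same route as the paper, which treats the corollary as a direct consequence of Lemma~\ref{Lemma:maxLink}: there the uncertainty term is rewritten as $2\sqrt{p_i(1-p_i)}\cdot\frac{1}{\sqrt{x}}$ with $x=\sum_{e_i\in C}\text{RTT}_i$, giving exactly your $O\bigl(1/\sqrt{x}\bigr)$ rate. Your only addition is to make the argument explicit and to use the uniform bound $p_i(1-p_i)\le 1/4$ so that heterogeneous erasure probabilities are handled, which is a harmless strengthening of the paper's identical-$p_i$ setup.
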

We now present the main theoretical bounds applicable to a general network setting.
\begin{theorem}\label{theorem:generalNetworkBound}
Let $G$ be a network. The uncertainty in the throughput, $\eta$ at time $t$ can be characterized by upper, lower, and mean bounds, which are determined by the rates of the  forward links in the minimum cut of the network, i.e., for $\min\{C \subseteq E\}$. Specifically, we have
\vspace{-0.5cm}
\begin{adjustwidth}{-0.6em}{0em}
\setlength{\belowdisplayskip}{1pt}
\begin{align}
    \eta^{max} &\triangleq  \min_{C \subseteq E}(w^{max}(C)), \label{etaMax}\\
    \eta^{min} &\triangleq  \min_{C \subseteq E}(w^{min}(C)), \label{etaMin}\\
    \eta^{mean} &\triangleq  \min_{C \subseteq E}(w^{mean}(C)). \label{etaMean}
\end{align}
\end{adjustwidth}
\end{theorem}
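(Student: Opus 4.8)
The plan is to reduce the statement to the classical max-flow min-cut theorem applied separately in each of the three scenarios, with Theorem~\ref{the:cut} supplying the per-cut rate bounds. The anchoring observation is that, for any fixed realization $G(t_l)$ of link rates (Definition~\ref{def:realization}), the throughput delivered from $s$ to $d$ equals the capacity of the minimum cut: every unit of flow reaching $d$ must traverse each cut separating $s$ from $d$, so the throughput cannot exceed $\sum_{e_i \in C} r_i(t_l)$ for any cut $C$, and the max-flow min-cut theorem guarantees that the tightest such bound is attained. Hence in each realization the throughput is $\min_{C \subseteq E}\sum_{e_i \in C} r_i(t_l)$, and the quantity governing it is the minimum cut.

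First I would establish the mean bound, which is the cleanest case. Since the normalized mean rate $r_i(t^-)/\alpha_i$ is assigned additively to each edge, $w^{mean}(C)$ is precisely the capacity of cut $C$ under these edge weights. Applying max-flow min-cut to the graph weighted by $\{r_i(t^-)/\alpha_i\}$ then yields $\eta^{mean} = \min_{C\subseteq E} w^{mean}(C)$ directly, matching \eqref{etaMean}, with both the converse (bottleneck) and achievability (existence of a flow attaining the min-cut) inherited verbatim from the classical theorem.

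Next I would treat the upper and lower bounds together. By Theorem~\ref{the:cut}, the aggregate rate of any cut $C$, viewed as a single binomial aggregate with variance $\sum_{e_i \in C} var_i(t)$, is confined to $[\,w^{min}(C), w^{max}(C)\,]$. Because the throughput is pinned to the minimum cut and each cut is a bottleneck the flow must cross, the converse direction gives that the upper (resp. lower) confidence bound on $\eta$ is at most the smallest value over all cuts of the corresponding cut bound; achievability at the minimizing cut then matches it. This produces $\eta^{max} = \min_{C\subseteq E} w^{max}(C)$ and $\eta^{min} = \min_{C\subseteq E} w^{min}(C)$, establishing \eqref{etaMax} and \eqref{etaMin}.

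The step I expect to be the main obstacle is the max and min cases, because $w^{max}(C)$ and $w^{min}(C)$ are \emph{not} additive functions of the individual edge rates: the fluctuation term $\sqrt{\sum_{e_i\in C} var_i(t)}\big/\sum_{e_i \in C}\text{RTT}_i$ couples all edges of the cut and is strictly sub-additive relative to the edgewise sum of deviations, which is exactly what yields the inequalities $w^{max}(C)\le\sum_{e_i\in C} r_i^{max}$ and $w^{min}(C)\ge\sum_{e_i\in C} r_i^{min}$ in Theorem~\ref{the:cut}. Consequently the classical max-flow min-cut theorem, which presumes edge-separable capacities, does not transfer verbatim to these two scenarios. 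I would address this by arguing that the monotone ``mean plus aggregate deviation'' structure preserves the relevant ordering of cuts, so that the cut minimizing $w^{mean}(C)$ together with its aggregated confidence interval remains the binding constraint; the bottleneck argument then supplies the converse, while achievability of the interval endpoints at that cut follows from treating it as a single aggregate process, so that taking the minimum of $w^{max}(C)$ and of $w^{min}(C)$ over all cuts still delivers the correct throughput envelope.
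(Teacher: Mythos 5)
Your overall route --- combine the per-cut interval bounds of Theorem~\ref{the:cut} with the classical max-flow min-cut theorem applied realization-by-realization --- is essentially the paper's approach, except that the paper compresses the entire argument into one sentence (``a direct consequence of Proposition~\ref{lem:link} and Theorem~\ref{the:cut}'') and never confronts the non-additivity issue you correctly flag. Your identification of that issue is the most valuable part of the proposal: $w^{max}(C)$ and $w^{min}(C)$ are not edge-separable capacities, so one cannot literally run max-flow min-cut on a graph ``weighted by $r_i^{max}$'' and hope to land on $\min_C w^{max}(C)$.

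However, your resolution of that obstacle is where the argument goes wrong. You assert that the ``mean plus aggregate deviation'' structure preserves the ordering of cuts, so that the cut minimizing $w^{mean}(C)$ remains the binding constraint for all three quantities. That is false in general, and the paper says so explicitly in the sentence immediately following the theorem (``each bound may be determined by a different network cut''); indeed the possibility that the three minimizers differ is the entire motivation for the stability analysis of Section~\ref{stabilitySection}. The good news is that the correct argument does not need a common minimizer. For any fixed realization, max-flow min-cut (with the additive realized rates) gives that the throughput equals $\min_{C}\sum_{e_i\in C} r_i(t_l)$; in particular, for \emph{every} cut $C$ the throughput is at most the realized aggregate rate of $C$, which by the concentration step of Theorem~\ref{the:cut} is at most $w^{max}(C)$, so the throughput is at most $\min_C w^{max}(C)$. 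Symmetrically, the realized aggregate rate of every cut is at least $w^{min}(C)$, so the minimum over cuts of the realized rates is at least $\min_C w^{min}(C)$. Each inequality is obtained cut-by-cut before taking the minimum, so no claim about which cut attains which minimum is ever needed. With that repair your proof is sound and is considerably more informative than the paper's.
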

\begin{proof}
    The correctness of this theorem is a direct consequence of Proposition~\ref{lem:link} and Theorem~\ref{the:cut}. 
\end{proof}

Notice that in Theorem~\ref{theorem:generalNetworkBound}, each bound may be determined by a different network cut, that is, by a distinct set of edges. This observation reveals a structural complexity in the bound, which we explore further in the next section by analyzing the number and variability of such minimal cut sets.

\section{Stability of Minimum Cuts under Link Variance}
\label{stabilitySection}

This section addresses a key limitation of the bound established in Section~\ref{ThroughputSection}, the sets of edges defining the minimum cuts for $\eta^{\max}$, $\eta^{\min}$, and $\eta^{\mathrm{mean}}$, as given in equations~\eqref{etaMax}, \eqref{etaMin}, and \eqref{etaMean}, may differ due to fluctuations in link capacities. We aim to identify conditions under which these minimum cut sets coincide. Focusing on this stable case allows the analysis to be centered on a single, consistent bottleneck across all three bounds. Accordingly, this section is divided into two subsections, the first shows that the number of distinct minimum cut sets can be exponential, and the second proposes sufficient conditions under which a single cut set governs all three bounds.

\subsection{Instability of Minimum Cuts under Link Variance}\label{subSec:problem}

The notion of stability introduced in this section is motivated by analogous concepts in computational geometry, where it has been studied in the context of Voronoi diagrams~\cite{gitik2021voronoi, gitik2022computational} and Euclidean minimum spanning trees~\cite{gitik2021euclidean, gitik2022computational}.

\begin{definition}
Let $G=(V,E)$ define a network. The sets $MC(t_l)\subseteq E$ and $MC(t_m )\subseteq E$, with $l \ne m$, are considered \emph{equivalent} iff they contain exactly the same edges. That is, $ \forall e_i \in MC(t_l)$ holds that $e_i \in MC(t_m)$. And vice versa. Otherwise, the sets $MC(t_l)$ and $MC(t_m)$ are \emph{distinct}.
\end{definition}
\begin{theorem}
A network $G$ may admit an exponential number of distinct minimum cut realizations under different rate assignments.
\end{theorem}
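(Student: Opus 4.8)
The plan is to prove the statement by explicit construction, exhibiting a single network whose number of distinct minimum-cut sets grows like $2^{|E|/2}$ as the rate realization varies over the admissible intervals. First I would build a ``parallel two-edge paths'' gadget: take a source $s$, a destination $d$, and $n$ intermediate nodes $w_1,\dots,w_n$, and for each $i$ connect $s-w_i-d$ by two series links $a_i=(s,w_i)$ and $b_i=(w_i,d)$. This network has $|V|=n+2$ nodes and $|E|=2n$ links, so $2^n=2^{|E|/2}$.

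Next I would characterize the candidate cuts. Placing $s\in S$ and $d\in D$ and assigning each $w_i$ independently to $S$ or $D$ yields, by Definition~\ref{def:cut}, a cut containing exactly one edge per path: $b_i$ when $w_i\in S$ and $a_i$ when $w_i\in D$. These assignments produce $2^n$ syntactically distinct edge sets, each selecting one of $\{a_i,b_i\}$ on every path. Any other $s$-$d$ cut strictly contains one of these minimal cuts and hence, since link rates are nonnegative, has strictly larger aggregate rate and can never attain the minimum.

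The core step is to show each of these $2^n$ minimal cuts is realizable as the unique minimum. Fix a target cut $C^\star$, and choose a common nondegenerate interval with $0<r_i^{min}<r_i^{max}$ for every link (achievable via Proposition~\ref{lem:link} by selecting suitable $p_i$, $var_i$, and $\text{RTT}_i$). Define a realization $G(t_l)$ that assigns rate $r_i^{min}$ to every edge chosen by $C^\star$ and rate $r_i^{max}$ to its non-chosen partner on the same path. Then $\sum_{e_i\in C^\star} r_i(t_l)=\sum_i r_i^{min}$, while any competing minimal cut $C'$ differs from $C^\star$ on at least one path, where it must take the partner edge carrying rate $r_i^{max}>r_i^{min}$; summing path-by-path shows $\sum_{e_i\in C'} r_i(t_l)>\sum_{e_i\in C^\star} r_i(t_l)$. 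Hence $C^\star = MC(t_l)$ uniquely, and distinct target cuts are witnessed by distinct realizations, giving $2^n=2^{|E|/2}$ distinct minimum-cut sets, i.e. an exponential count in $|E|$ (equivalently in $|V|$), matching the $O(2^{|E|})$ bound.

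The main obstacle is ensuring strictness and confirming that the admissible intervals genuinely permit the independent per-edge assignments used above. I would therefore verify explicitly that Definition~\ref{def:realization} allows each link to take any value in $[r_i^{min},r_i^{max}]$ independently, and that the nondegenerate interval (so $r_i^{min}<r_i^{max}$, both strictly positive) removes ties from both the competing minimal cuts and the redundant non-minimal cuts. With these points secured, the per-path comparison is routine and the exponential lower bound follows.
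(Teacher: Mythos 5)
Your proposal is correct and uses essentially the same construction as the paper: $n$ parallel two-hop paths $s\text{--}w_i\text{--}d$ with $|E|=2n$, and a per-path choice of which edge carries the smaller realized rate, yielding $2^n$ distinct minimum cuts. The only (cosmetic) difference is that the paper pins the $s$-side edges at the midpoint $(r_i^{min}+r_i^{max})/2$ while you swap $r_i^{min}$/$r_i^{max}$ between partners; your version is slightly more explicit about uniqueness and tie-breaking, but the route is the same.
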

\begin{proof} 
By constructing an example. Suppose that there are $n$ paths for the source $s$ to the destination $d$, as described in Fig.~\ref{fig:Fig1}. The number of edges is $|E|=2n$. For simplicity, assume that $[r_i^{min},r_i^{max}]$ is identical for $\forall e_i \in E$. Let all the $n$ edges originate at $s$, that is, $e_i=(s,v_i)$ at time $t$ had the rate $r_i(t_l) =  (r_i^{min}+r_i^{max})/2$. And let define the rate $r_j(t_l ) = r_i^{min}$ for any subset of the $n$ edges of the form $e_j=(v_j,t)$ and $r_j (t_l) = r_i^{max}$ to the remained edges. In this way, the min-cut set $MC(t_l)$ contains all the edges with the rate $r_i^{min}$ and all the edges with the rate $r_i(t_l) = (r_i^{min} + r_i^{max})/2$ that correspond to the path with the rate $r_i^{max}$. Consequently, graph $G$ contains $2^n$ different min-cut.
\end{proof}

\begin{figure}
\centering{\includegraphics[width=0.4\linewidth]{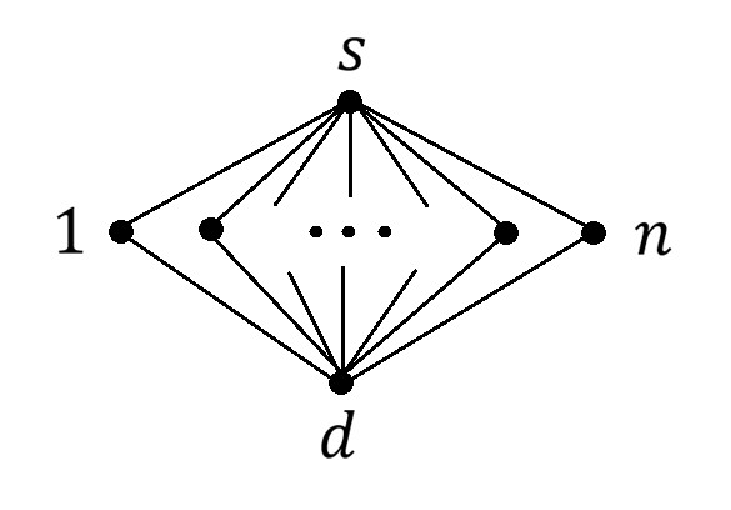}}
\caption{A network originated at $s$, the source vertex, with $n$ paths toward the destination $d$.}\vspace{-0.5cm}
\label{fig:Fig1}
\end{figure}

\subsection{Conditions for Minimum Cut Stability}

Having seen that the number of distinct minimum cut sets can be exponential, we now seek to characterize when these sets align.

\begin{definition}\label{def:stability}
Let $G=(V,E)$ define a network. The min-cut set is \emph{stable} iff
$\forall t_l,\forall t_m$, $l \ne m$, $MC(t_l)$ is equivalent to $MC(t_m)$. When the cut is stable, it is denoted as $MC \subseteq E$.
\end{definition}

Building on the preceding theorem and definitions, we now establish the following result.

\begin{theorem}
Assume $G=(V,E)$ defines a network. When $MC$ is stable, under Definition~\ref{def:stability}, the maximum and minimum throughput, between source $s$ and destination $d$, can be obtained by $e_i \in MC$ as follows,
\setlength{\jot}{1pt}
\begin{align}
\eta^{max}_{stable} &\triangleq \sum_{e_i \in MC} \frac{r_i(t^-)}{\alpha_i} + \frac{\sqrt{\sum_{e_i \in MC}var_i(t)}}{\sum_{e_i \in MC}{\text{RTT}_i}},  \label{etaMaxStable} \\
\eta^{min}_{stable} &\triangleq \sum_{e_i \in MC} \frac{r_i(t^-)}{\alpha_i} - \frac{\sqrt{\sum_{e_i \in MC}var_i(t)}}{\sum_{e_i \in MC}{\text{RTT}_i}}. \label{etaMimStable}
\end{align}
\end{theorem}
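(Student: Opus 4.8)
The plan is to reduce the general, three-cut minimization of Theorem~\ref{theorem:generalNetworkBound} to a single fixed cut and then finish by direct substitution into Theorem~\ref{the:cut}. First I would recall that, by Definition~\ref{def:stability}, stability means $MC(t_l)$ is equivalent to $MC(t_m)$ for every pair of rate realizations $t_l, t_m$; in particular the minimum cut is the same edge set $MC$ for the extreme realizations in which each link attains $r_i^{max}$ and $r_i^{min}$, as well as for the mean realization $r_i(t^-)$. Thus a single bottleneck set $MC$ simultaneously governs $\eta^{max}$, $\eta^{min}$, and $\eta^{mean}$, which is precisely the feature that fails in the unstable case of Section~\ref{subSec:problem}.

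Next I would evaluate the cut-aggregate quantities of Theorem~\ref{the:cut} on this fixed set $C = MC$. Substituting $C = MC$ into \eqref{cutSumMax} gives $w^{max}(MC) = \sum_{e_i \in MC} r_i(t^-)/\alpha_i + \sqrt{\sum_{e_i \in MC} var_i(t)} \big/ \sum_{e_i \in MC}\text{RTT}_i$, and likewise \eqref{cutSumMin} gives $w^{min}(MC)$ with the fluctuation term subtracted; rewriting the index range $\sum_{i=1}^{|C|}\text{RTT}_i$ as the set sum $\sum_{e_i \in MC}\text{RTT}_i$ yields exactly \eqref{etaMaxStable} and \eqref{etaMimStable}. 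Identifying $\eta^{max}_{stable}$ with $w^{max}(MC)$ and $\eta^{min}_{stable}$ with $w^{min}(MC)$ then completes the argument.

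I expect the main obstacle to be justifying that the \emph{same} stable set $MC$ that minimizes the realization aggregates $\sum_{e_i \in C} r_i(t_l)$ also governs the variance-aggregated bounds $w^{max}(C)$ and $w^{min}(C)$. The difficulty is that stability, as defined, constrains only the per-realization rate sums, whereas the fluctuation term $\sqrt{\sum var_i}/\sum \text{RTT}_i$ behaves oppositely: by Lemma~\ref{Lemma:maxLink} it shrinks as a cut grows, so a larger non-bottleneck cut could in principle carry a smaller fluctuation than $MC$. I would therefore either (i) strengthen the invocation of stability to the extreme all-max and all-min realizations, where the bottleneck ordering of the mean terms dominates and forces $MC$ to remain the governing cut, or (ii) adopt $\eta^{max}_{stable} \triangleq w^{max}(MC)$ and $\eta^{min}_{stable} \triangleq w^{min}(MC)$ as the definitions of the stability-compliant bounds, in which case the theorem reduces to the statement that a single, consistent bottleneck governs all three bounds and the formulas hold by construction. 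Resolving which of these readings is intended is the crux of the proof.
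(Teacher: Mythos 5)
Your computation is correct and your final formulas match the paper's, but the paper resolves the crux you identify by a different, more local argument than either of your two proposed readings. Rather than arguing that the stable $MC$ minimizes the variance-aggregated quantities $w^{max}(C)$ and $w^{min}(C)$ over all cuts (which, as you rightly note, does not follow from stability alone, since the fluctuation term $\sqrt{\sum var_i}/\sum \text{RTT}_i$ shrinks for larger cuts), the paper works at the edge level: for any $e_i \in MC$ and any other edge $e_j$ on the same $s$--$d$ path, stability forces $e_j$ to never replace $e_i$ in the minimum cut, which yields the dominance condition $r_i^{\max} \leq r_j^{\min}$. This guarantees that at \emph{every} realization the bottleneck edges are exactly those of $MC$, so by the max-flow min-cut theorem the throughput is always governed by $MC$, and the stated expressions follow by evaluating the aggregate of Theorem~\ref{the:cut} on that fixed set --- essentially your option (i), but localized to edges sharing a path rather than to whole-cut comparisons, which sidesteps the need to compare fluctuation terms across cuts of different sizes. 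Your option (ii), treating the formulas as definitions, is weaker than what the paper intends, since the theorem claims these are the achievable extremes, not merely labels. The one step you should make explicit if you pursue your route is precisely this path-level consequence of Definition~\ref{def:stability}; without it, the substitution $C = MC$ into \eqref{cutSumMax}--\eqref{cutSumMin} produces the right expressions but does not by itself certify that they bound the network throughput.
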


\begin{proof}
Let $e_i \in MC$ be any edge in the minimum cut, and consider any other edge $e_j \in E$ that lies on the same path from the source $s$ to the destination $d$. By the definition of stability, $e_j$ cannot replace $e_i$ in the minimum cut. This implies that $r_i(t_l) \leq r_j(t_m), \forall t_l, \forall t_m$, that is, $r_i^{\max} \leq r_j^{\min}$. By the max-flow min-cut theorem \cite{yeung2008chapter18}, the maximum achievable throughput under this condition is given by $\eta^{\max}_{\mathrm{stable}}$. 

Moreover, $r_i^{\min} < r_i^{\max} \leq r_j^{\min}$ ensures, by the same reasoning, that $e_i$ consistently limits the flow, implying that the minimum achievable throughput is $\eta^{\min}_{\mathrm{stable}}$.
\end{proof}

The key difference between equations \eqref{etaMax}, \eqref{etaMin} and equations \eqref{etaMaxStable}, \eqref{etaMimStable} lies in the definition of the underlying min-cut sets. In the latter case, $\eta^{max}_{stable}$ and $\eta^{min}_{stable}$ are computed over an identical and stable set of min-cut edges. In contrast, $\eta^{max}$ and $\eta^{min}$ may be derived from differing min-cut edge sets.

Next, we define an efficient algorithm to verify the stability of $MC$. 
\begin{theorem}
    Verifying the stability of a network $G = (V, E)$ can be done using the following algorithm, with a time complexity of $O(|E|^2)$.
\end{theorem}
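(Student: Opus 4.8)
The plan is to reduce the verification of stability---which a priori requires checking that $MC(t_l)$ is equivalent to $MC(t_m)$ over the \emph{continuum} of all realizations $r_i(t_l) \in [r_i^{min}, r_i^{max}]$---to a \emph{finite} collection of pairwise comparisons of the interval endpoints $r_i^{max}$ and $r_j^{min}$. The starting point is the characterization already established in the proof of the preceding theorem: a candidate min-cut edge $e_i$ remains the bottleneck against a competing edge $e_j$ on the same $s$--$d$ path exactly when $r_i^{max} \le r_j^{min}$. I would first argue that this endpoint condition is not only necessary but also sufficient. Since each of $w^{min}(C)$ and $w^{max}(C)$ is monotone in the individual link rates and the intervals are independent (Theorem~\ref{the:cut}), the most adversarial realization for swapping $e_i$ out of the cut assigns $r_i^{max}$ to the cut edges and $r_j^{min}$ to the competitors; if no swap occurs at these extreme assignments, none occurs at any interior point. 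This collapses the infinite realization space onto the two extreme endpoint assignments.

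Next I would describe the algorithm itself. First, identify a single candidate min-cut---for instance, the mean-rate minimum cut underlying $\eta^{mean}$ in Theorem~\ref{theorem:generalNetworkBound}. Then, for every edge $e_i$ in this candidate cut, verify the separation inequality $r_i^{max} \le r_j^{min}$ against every other edge $e_j$ that shares an $s$--$d$ path with $e_i$, declaring the cut stable iff all these inequalities hold. Each comparison is $O(1)$, there are at most $|E|$ candidate cut edges, and each is checked against at most $|E|$ other edges, so the comparison phase costs $O(|E|^2)$. For the complexity bookkeeping I would note that identifying the competing same-path edges is folded into a single reachability traversal of $G$, and that the candidate cut is obtained once; both steps are dominated by the $O(|E|^2)$ comparison loop, with the traversal contributing the additive $|V|$ term that accounts for the $O(|E|^2 + |V|)$ figure quoted in the abstract.

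The step I expect to be the main obstacle is the sufficiency direction: rigorously showing that invariance of the min-cut \emph{set} at the two extreme endpoint realizations forces invariance across the entire product of intervals. This needs a monotonicity and exchange argument on cut values to rule out that some interior combination of rates yields a min-cut edge set exhibited by neither endpoint, and it must be shown to survive in general topologies rather than only the parallel-path construction of Fig.~\ref{fig:Fig1}. A secondary difficulty is keeping the identification of the candidate cut and its competing edges within the $O(|E|^2)$ budget, since a generic max-flow subroutine could exceed it; I would therefore lean on the path-based reachability formulation, which suffices because stability is only being \emph{verified}, not optimized.
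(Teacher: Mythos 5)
Your proposal takes essentially the same approach as the paper: for each edge $e_i$ in the candidate min-cut, compare against every other edge $e_j$ on a shared $s$--$d$ path and check $r_i^{\max} \leq r_j^{\min}$, giving $O(|E|^2)$ pairwise comparisons. The sufficiency concern you flag as the main obstacle is real but is not resolved in the paper either --- its proof simply asserts the one-directional implication that the endpoint inequality prevents $e_j$ from replacing $e_i$, so your sketch is, if anything, more careful than the original.
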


\begin{proof}
    For each edge $e_i \in MC$, compare it with every other edge $e_j \in E$ that lies on the same path from the source $s$ to the destination $d$. Check whether the condition $r_i^{\max} \leq r_j^{\min}$ holds. If this condition is satisfied, then the edge $e_j$ cannot replace $e_i$ in the set $MC(t_l)$ for $\forall t_l$, as it implies $r_i(t_l) \leq r_j(t_m),\forall t_l, \forall t_m$. Since each edge $e_i \in E$ is compared with $O(|E|)$ other edges, the overall time complexity is $O(|E|^2)$.
\end{proof}

 Relying on the stable case enables the analysis to focus on a single, consistent set of edges, allowing link fluctuations to be measured exclusively within this fixed set. However, the applicability of these results is limited by the restrictive nature of the stability criterion in complex network topologies, which complicates practical implementation; we address this challenge in the next section.

\section{Stability Forcing Algorithm} \label{algorithmSection} 

The results in Section~\ref{stabilitySection} face two key limitations. The stability condition based on the $MC$ is often overly restrictive in general networks, and the $MC$ may fluctuate from one time slot to the next. To address these challenges, we next propose a stable solution framework for general network settings that identifies the minimum cut while (1) maximizing throughput, (2) minimizing the necessary reduction in link uncertainty, and (3) maintaining low computational complexity. This section comprises two subsections, the first presents the algorithm and the second analysis its runtime and properties.

\subsection{Algorithm Overview}\label{sec:algorithm-overview}

Consider a single path from $s$ to $d$ with $m$ hops. Denote the edge with the minimum mean rate by $e_1$. Suppose first that there are no two links with the same minimum mean rate. Denote the rest of the edges in ascending order $\{e_2, \dots ,e_m\}$ according to $r_i^{min}$ in a normalized equation \eqref{minRate}, that is, the minimum rate is $r_2^{min}$ and the maximum is $r_m^{min}$. We compare $r_1^{max}$ and $r_2^{min}$, when $r_1^{max} \leq r_2^{min}$, the edges $e_1$ and $e_2$ are stable, and so is the path, as $r_1^{max} \leq r_2^{min} \leq \dots \leq r_m^{min}$. Otherwise, when $r_1^{max} > r_2^{min}$ we tune the parameter $\sigma$ in normalized equations \eqref{maxRate} and \eqref{minRate} to force $r_1^{max} \leq r_2^{min}$ as follows. 

Denote by $\sigma_1$ and $\sigma_2$, the $\sigma$ belongs to links $e_1$ and $e_2$ respectably. We divide the range between $r_1^{mean}$ and $r_2^{mean}$ equally to $e_1$ and $e_2$ (see Fig.~\ref{fig:Fig2}), so that $r_1^{max}$ ends where $r_2^{min}$ starts, 
$$\small \frac{r_1^{mean} +r_2^{mean}}{2}=r_2^{mean}-\sigma_2 \frac{\sqrt{v_2(t)}}{\text{RTT}_2},$$ 
which gives
$$\sigma_2 = \frac{(r_2^{mean}-r_1^{mean})\text{RTT}_2}{2\sqrt{v_2(t)}}{},$$
and 
$$\small r_2^{min}=\frac{r_2(t^-)}{\alpha_2}-\frac{r_2^{mean}-r_1^{mean}}{2}.$$
Similarly, 
$$\small r_1^{max}=\frac{r_1(t^-)}{\alpha_1}+\frac{r_2^{mean}-r_1^{mean}}{2}.$$

\begin{figure}
    \centering
    \includegraphics[width=0.35\linewidth]{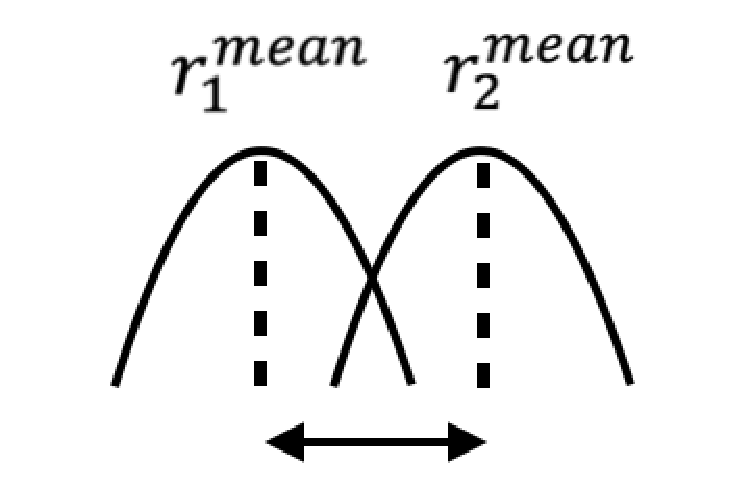}
    \caption{\label{fig:Fig2}The rates of links \( e_1 \) and \( e_2 \) are depicted, with dashed lines representing the mean rates \( r_1^{\text{mean}} \) and \( r_2^{\text{mean}} \). The distributions are shown by the arcs, and the arrow indicates the range between \( r_1^{\text{mean}} \) and \( r_2^{\text{mean}} \).}
    \vspace{-0.5cm}
\end{figure}
Next, we test whether the updated $r_1^{max}$ is smaller than $r_3^{min}$ and if $r_1^{max} > r_3^{min}$ we adjust $\sigma_3$ so that $r_1^{max} \leq r_3^{min}$ holds. And so on, we test each edge on the path $e_i \in \{e_3 ,\dots,e_m \}$. Consequently, it holds $r_1^{max} \leq r_i^{min},2 \leq i \leq m$, and the path is stable. 

Consider now the case where there are two, or more, links with the same minimum mean rate. If all of them had the same variance, we chose one of them randomly as $e_1$. Otherwise, we consider all variance of the minimum mean rate and construct a virtual edge with $r_1^{max}$ according to the smallest variance and $r_1^{min}$ according to the largest variance and define it as $e_1$. In both cases, we compare $e_1$ with the other links on the path excluding the links with the same minimum mean rate.

For the general case of a graph $G = (V, E)$, we begin by identifying the minimum cut corresponding to the average rates $r_i^{mean}$ for all $i \in \{1, \dots, |E|\}$, that is $MC(t^-)$. Then, for each edge in the cut, $e_j = (v_a, v_b) \in MC(t^-)$, we trace the path from $s$ to $v_a$, and from $v_b$ to $d$, collecting all the edges along both subpaths. We then apply the single-path procedure, as described earlier in this section, to the resulting set of edges, including $e_j$. Note that, by construction, the edge $e_j = (v_a, v_b) \in MC(t^-)$ plays the role of $e_1$ in the single-path case from $s$ to $d$.

When an edge $e_i \in E$, $e_i \notin MC(t^-)$ is on more than one path from $s$ (or $d$) to the $MC(t^-)$ edges denoted by $e_{j1}, e_{j2} \dots e_{jm}$, where $m$ is the number of paths that $e_i$ lead to them. Then, $e_i$ compared to each of thous $m$ edges of $MC(t^-)$, and associated with $\sigma_i$ which gives the maximal $r_i^{min}$. This way $r_i^{min} \geq r_{j1}^{max}, \dots, r_i^{min} \geq r_{jm}^{max}$.  
\vspace{-0.2cm}
\subsection{Runtime and Analytical Properties}
In this subsection, we analyze the complexity of the proposed algorithm and examine its properties.

\begin{theorem}
    The time complexity of the stability forcing algorithm is $O(|E|^2+|V|)$ given the $MC(t^-)$.
\end{theorem}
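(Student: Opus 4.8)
The plan is to account for the total work done by the algorithm described in Section~\ref{sec:algorithm-overview} by decomposing it into two phases: the one-time computation of the mean-rate minimum cut $MC(t^-)$, and the per-edge stability-forcing comparisons. First I would establish the cost of identifying $MC(t^-)$. Since the theorem statement assumes $MC(t^-)$ is \emph{given}, I would argue that this phase contributes no additional cost beyond reading the cut, or at most the $O(|V|)$ needed to traverse the node set once. The meat of the argument is in bounding the second phase.

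For the second phase, the key observation is that the algorithm, for each edge $e_j \in MC(t^-)$, traces the subpaths from $s$ to $v_a$ and from $v_b$ to $d$, and then runs the single-path stability procedure on the collected edges together with $e_j$. I would first bound the single-path procedure in isolation: for a path with $m$ hops, it sorts the edges by $r_i^{min}$ (costing $O(m\log m)$, which is dominated by $O(m^2)$) and then performs at most $m-1$ comparisons of the form $r_1^{max} \le r_i^{min}$, each with a constant-time $\sigma$-tuning adjustment. The crucial step is to argue that the total number of edge-visits summed over all paths traced from all cut edges is $O(|E|)$ per cut edge in the worst case, and that no single edge $e_i \notin MC(t^-)$ is compared more than once against each of the (at most $O(|E|)$) cut edges it can reach. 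I would formalize this by charging each comparison to an ordered pair $(e_i, e_j)$ with $e_i \in E$ and $e_j \in MC(t^-)$; since $|MC(t^-)| \le |E|$, the number of such pairs is $O(|E|^2)$, yielding the quadratic term.

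Next I would handle the multi-path subtlety explicitly: when an edge $e_i \notin MC(t^-)$ lies on several paths leading to distinct cut edges $e_{j1},\dots,e_{jm}$, the algorithm compares $e_i$ against each and retains the $\sigma_i$ giving the maximal $r_i^{min}$. This is still at most one comparison per $(e_i, e_{jk})$ pair, so it does not inflate the pair count beyond $O(|E|^2)$; it only requires taking a maximum over the reached cut edges, which is absorbed into the same bound. Combining the two phases gives $O(|E|^2) + O(|V|) = O(|E|^2 + |V|)$, matching the claim.

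I expect the main obstacle to be the path-tracing accounting, specifically, ensuring that tracing subpaths from $s$ and to $d$ for every cut edge does not secretly cost more than $O(|E|)$ per cut edge. The danger is that overlapping subpaths or repeated traversals could be double-counted or that an edge on exponentially many $s$--$d$ paths inflates the work; I would neutralize this by arguing that the procedure visits each \emph{edge} (not each path) at most once per cut edge during the trace, so the per-cut-edge trace cost is $O(|E|)$ rather than proportional to the number of distinct routes. Making this charging argument airtight, rather than merely plausible, is the part that needs the most care.
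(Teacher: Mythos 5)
Your proposal is correct and follows essentially the same route as the paper: the paper also traces the subpaths via graph traversal (two DFS runs costing $O(|V|+|E|)$, which is where its $|V|$ term originates) and then bounds the comparison/$\sigma$-adjustment work by $O(|E|\cdot|MC(t^-)|)\leq O(|E|^2)$, which is exactly your charging of each comparison to an ordered pair $(e_i,e_j)$ with $e_j\in MC(t^-)$. Your treatment of the multi-path case and the warning about double-counting overlapping subpaths are more explicit than the paper's, but they do not change the argument.
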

\begin{proof}
    Performing two depth-first search (DFS) \cite{cormen2009introduction}, one from $s$ to $\forall v_a, (v_a,v_b) \in MC(t^-)$ and the second from $d$ to $v_b,(v_a,v_b) \in MC(t^-)$ take $O(|V|+|E|)$ time. For each $e_i \in MC(t^-)$ find the edge $e_j$ with minimal $r_j^{min}$ among all the edges that originate in $s$ passing through $e_i$ and reaching $d$ overall, $\forall e_i \in MC(t^-)$ takes $O(|E| \cdot |MC(t^-)|)$ time. Adjusting the $\sigma$ parameter of the edges takes $O(|E| \cdot |MC(t^-)|)$ time. And $O(|E|\cdot|MC(t^-)|) \leq O(|E|^2)$.    
\end{proof}

The following theorem provides a worst-case analysis of the potential throughput degradation introduced by the stability-forcing algorithm.  

\begin{theorem} \label{theorem:worstCase}
    In the worst-case scenario, the algorithm's maximum achievable throughput $\eta^{max}$ asymptotically converges to the mean throughput $\eta^{mean}$; that is,
    \[
        \eta^{max} \to \eta^{mean}.
    \]
\end{theorem}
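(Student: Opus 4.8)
The plan is to track how the stability-forcing algorithm of Section~\ref{sec:algorithm-overview} reshapes the tunable parameter $\sigma$ on the bottleneck, and then to identify the network configuration that maximizes the induced throughput penalty. Recall that for a minimum-cut edge $e_1 \in MC(t^-)$, the algorithm enforces stability against the competing edge $e_2$ having the smallest $r_i^{min}$ on its $s$--$d$ path by setting $r_1^{max} = \tfrac{1}{2}\bigl(r_1^{mean} + r_2^{mean}\bigr)$. Equivalently, it selects $\sigma_1$ so that the retained upper deviation obeys
\begin{align*}
\sigma_1 \frac{\sqrt{var_1(t)}}{\text{RTT}_1} = \frac{r_2^{mean} - r_1^{mean}}{2}.
\end{align*}
First I would use this identity to express the residual uncertainty retained by the bottleneck purely in terms of the mean gap $r_2^{mean} - r_1^{mean}$.

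Next I would make the worst case precise. Because $e_1$ is the minimum-mean edge of the cut, every competing edge on its path satisfies $r_2^{mean} \ge r_1^{mean}$, and the first adjustment already caps the retained upper deviation of $e_1$ at $\tfrac{1}{2}(r_2^{mean} - r_1^{mean})$, with any subsequent adjustment against $e_3,\dots,e_m$ only shrinking it further. Hence this deviation is at most $\tfrac{1}{2}(r_2^{mean} - r_1^{mean})$, which is monotone in the smallest mean gap. Consequently the throughput penalty inflicted by the algorithm is largest exactly when this gap is smallest, i.e.\ as $r_2^{mean} \to r_1^{mean}$ from above, whereupon $\sigma_1 \to 0$ and $r_1^{max} \to r_1^{mean}$: the entire uncertainty band carried by the bottleneck collapses onto its mean.

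I would then lift this single-edge statement to the aggregate bound $\eta^{max}$. Applying the same limit to every edge of the minimum cut simultaneously forces each $\sigma_i \to 0$, so $r_i^{max} \to r_i^{mean}$ for all $e_i \in MC$. Invoking the upper estimate in~\eqref{cutSumMax} together with the trivial lower bound $w^{mean}(C) \le w^{max}(C)$ gives the sandwich
\begin{align*}
w^{mean}(MC) \le w^{max}(MC) \le \sum_{e_i \in MC} r_i^{max} \longrightarrow \sum_{e_i \in MC} r_i^{mean} = w^{mean}(MC),
\end{align*}
whence $\eta^{max} = \min_{C} w^{max}(C) \to \min_{C} w^{mean}(C) = \eta^{mean}$. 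The rate of this collapse can be quantified via the preceding corollary, whose $O\!\bigl(1/\sqrt{\sum_{e_i \in MC}\text{RTT}_i}\bigr)$ estimate uniformly controls the residual deviation term.

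The step I expect to be the main obstacle is the aggregation rather than the single-link limit. On an isolated path the midpoint split is clean, but in a general graph a cut edge may lie on several $s$--$d$ paths, and an off-cut edge may be simultaneously constrained by several cut edges --- the multi-path situation handled at the end of Section~\ref{sec:algorithm-overview}, where each such edge is assigned the $\sigma_i$ giving the maximal $r_i^{min}$. The care required is to fix, for every cut edge, its binding competitor before passing to the limit, and then to verify that driving all the corresponding mean gaps to zero simultaneously is a feasible joint configuration; this adversarial configuration is precisely the worst case that realizes the claimed convergence $\eta^{max} \to \eta^{mean}$.
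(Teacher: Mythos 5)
Your proposal is correct and follows essentially the same route as the paper's own (much terser) proof: both identify the worst case as the configuration in which, for every cut edge $e_i \in MC(t^-)$, some competing edge on its $s$--$d$ path has mean rate within $\epsilon$ of $r_i^{mean}$, so the algorithm's midpoint split drives $\sigma_i \to 0$ and $r_i^{max} \to r_i^{mean}$ as $\epsilon \to 0$. Your added detail --- the explicit $\sigma_1$ identity, the sandwich argument for aggregating over the cut, and the remark on multi-path binding competitors --- fills in steps the paper leaves implicit, but does not change the underlying argument.
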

\vspace{-0.3cm}
\begin{proof}
    Consider all edges define $MC(t^-)$ and suppose $\forall e_i \in MC(t^-)$, $\exists e_j$ on the path from $s$ to $d$ through $e_i$, such that $r_j^{mean} + \epsilon = r_i^{mean}$. As $\epsilon \to 0$, $r_i^{max} \to r_i^{mean}$.
\end{proof}
\vspace{-0.2cm}
\begin{remark}\label{remark:asymptotic}
In typical settings, the asymptotic regime is modeled using a Bernoulli process that converges to the first moment \cite{shannon1948mathematical,shannon1956zero,cover1999elements}. However, short block-length regimes often incorporate second-moment effects \cite{polyanskiy2010channel,polyanskiy2010channel1,polyanskiy2011feedback}, in addition to the Bernoulli model, due to increased sensitivity to variations in channel conditions. It is important to note that the stability-forcing algorithm presented in this section appears to primarily suppress the impact of the second moment, while preserving the first-moment characteristics of the system; specifically, it reduces link uncertainty without altering the average rate $r_i^{\text{mean}}$ on any link.
\end{remark}
\begin{remark}
    As established in Corollary~\ref{Lemma:maxLink}, when either the $\text{RTT}_i$ or the number of links in the min-cut increases, the maximum achievable throughput $\eta^{max}$ converges to the mean throughput $\eta^{mean}$. This convergence justifies the applicability of Theorem~\ref{theorem:worstCase} and Remark~\ref{remark:asymptotic} even in practical, finite network scenarios.
\end{remark}

The following theorem establishes the condition under which the computed minimum cut remains unchanged, thereby ensuring the validity of the throughput bounds defined in equations~\eqref{etaMaxStable} and~\eqref{etaMimStable}, which rely on this cut. If the condition is violated, recomputation of the minimum cut is required.

\begin{theorem}
Let $P = \{e_{p1},\dots,e_{pm}\}$ indicate all the $m$ edges of a single path from $s$ to $d$ throw $e_j\in MC(t^-)$, that is, for $e_j=(v_a,v_b)$, all the edges on the path from $s$ to $v_a$ and all the edges on the path from $v_b$ to $d$. The $MC(t^-)$ remains the same as long as we have $e_{pi}^{mean} \geq e_j^{mean},\forall e_{pi} \in P$ for each $e_j\in MC$.
\end{theorem}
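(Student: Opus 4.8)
The plan is to prove the statement via the max-flow min-cut theorem together with the per-path bottleneck characterization already exploited in the stability analysis of Section~\ref{stabilitySection}. The guiding idea is that $MC(t^-)$ is, by definition, the cut minimizing the total mean rate $\sum_{e_i \in C} r_i^{mean}$, and that each edge $e_j \in MC(t^-)$ belongs to the cut precisely because it is the rate-limiting edge on every $s$-$d$ path that passes through it. The hypothesis $r_{pi}^{mean} \ge r_j^{mean}$ for all $e_{pi} \in P$ is exactly the assertion that $e_j$ retains this bottleneck role; thus the objective is to show that when this holds simultaneously for every $e_j \in MC$, no alternative edge set can produce a strictly smaller cut, so the minimizing set stays $MC(t^-)$.

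First I would fix an edge $e_j = (v_a, v_b) \in MC(t^-)$ and, following the construction in Section~\ref{sec:algorithm-overview}, collect into $P$ all edges on the $s$-to-$v_a$ and $v_b$-to-$d$ subpaths, so that $P \cup \{e_j\}$ constitutes the edge set of the $s$-$d$ paths routed through $e_j$. Second, I would note that along any single such path the local minimum cut is the single edge of least mean rate; under $r_{pi}^{mean} \ge r_j^{mean}$ for all $e_{pi} \in P$, that least-rate edge is $e_j$ itself, so $e_j$ stays the bottleneck of each of its paths, mirroring the ``$e_j$ cannot be replaced'' reasoning of the stable-throughput theorem. Third, applying this to every $e_j \in MC$, I would run an exchange argument: swapping any $e_j$ out of the cut for a co-path edge $e_{pi}$ cannot lower the cut value, since $r_{pi}^{mean} \ge r_j^{mean}$, so $MC(t^-)$ remains a minimizer of $\sum_{e_i \in C} r_i^{mean}$ and the min-cut set is unchanged. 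I would then record the contrapositive, which justifies the accompanying claim that recomputation is required: if the condition fails for some $e_j$, there is a co-path edge $e_{pi}$ with $r_{pi}^{mean} < r_j^{mean}$ that supplants $e_j$ as the path bottleneck and forces the cut to move.

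The main obstacle I anticipate is bridging the local, per-path condition and the global optimality of the cut in a general graph, where the min-cut is a joint object rather than a disjoint union of single-path bottlenecks. Distinct edges of $MC(t^-)$ may share edges on their routing subpaths, and an edge $e_i \notin MC(t^-)$ may lead to several cut edges $e_{j1}, \dots, e_{jm}$ at once, which is exactly the multi-path situation handled separately in Section~\ref{sec:algorithm-overview}. I would resolve this by performing the exchange argument over the entire cut at once rather than path-by-path: any competing cut $C'$ must still sever every $s$-$d$ path and hence must contain, on each path through some $e_j$, an edge of mean rate at least $r_j^{mean}$; summing these lower bounds over a flow decomposition that saturates the edges of $MC(t^-)$ recovers the value $\sum_{e_j \in MC} r_j^{mean}$ and certifies that no competing cut can be strictly cheaper, which completes the preservation argument.
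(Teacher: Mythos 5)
Your first two steps and the closing contrapositive are exactly the paper's proof, which is only two sentences long: if $r_{pi}^{mean}\ge r_j^{mean}$ for all $e_{pi}\in P$, then $e_j$ still plays the role of the bottleneck edge $e_1$ from Section~\ref{sec:algorithm-overview} and every co-path edge is mitigated relative to it, so no recomputation is needed; if some $e_{pi}$ has $r_{pi}^{mean}<r_j^{mean}$, then $e_j$ is no longer the bottleneck and the cut must be recomputed. Where you diverge is in trying to upgrade this local, per-path observation into a global certificate that no competing cut $C'$ can beat $MC(t^-)$. That ambition is commendable --- it targets a real gap the paper glosses over --- but the argument you sketch does not close it: a single edge of $C'$ may lie on paths through several distinct cut edges $e_{j1},\dots,e_{jm}$, in which case its rate is counted once in $\sum_{e\in C'}r_e^{mean}$ yet would need to be charged against $r_{j1}^{mean}+\dots+r_{jm}^{mean}$; the hypothesis only gives you a lower bound by each $r_{jk}^{mean}$ individually, i.e.\ by their maximum, not their sum, and the flow decomposition you invoke would need edge-disjoint saturating paths that the hypothesis does not provide. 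Fortunately this extra machinery is also unnecessary for what the theorem (as the paper uses it) actually asserts: $MC(t^-)$ is by definition the minimizer of $\sum_{e_i\in C}r_i^{mean}$ at time $t^-$, and neither the passage of time nor the $\sigma$-tuning of the stability-forcing algorithm alters any mean rate (Remark~\ref{remark:asymptotic}), so global optimality over mean rates is part of the premise; the content of the theorem is precisely the local bottleneck-preservation condition that your first two steps --- and the paper's proof --- establish.
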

\begin{proof}
First, consider the case where $e_{pi}^{mean} \geq e_j^{mean},\forall e_{pi} \in P$. In this scenario $e_j$ corresponds to the edge denoted by $e_1$ in the algorithm, and $\forall e_{pi} \in P$ are mitigating compared to it. Thus, no recomputation of the minimum cut is required.
Now, consider the case where $\exists e_{pi}$ such that $e_{pi}^{mean} < e_j^{mean}$. In this situation $e_j^{mean}$ can no longer be considered as bottleneck in $MC(t^-)$. Therefore, the minimum cut must be recomputed to reflect this change.
\end{proof}

To reduce the computational complexity of computing $MC(t^-)$, a nearly linear-time approximation algorithm, such as the one proposed in \cite{spielman2004nearly}, may be used, offering significant advantages in scenarios with frequent minimum cut updates.

\section{Achievability} \label{achievabilitySection}
We consider an adaptive rateless random linear network coding (AR-RLNC) \cite{yang2014deadline,dias2023sliding,esfahanizadeh2024benefits} as a means to approach the performance bounds established in Sections~\ref{ThroughputSection} and \ref{stabilitySection}. The bounds of Section \ref{stabilitySection} can be attained through the algorithm presented in Section~\ref{algorithmSection}. We begin with a brief overview of RLNC \cite{ho2006random}, followed by a concise introduction to AR-RLNC, and conclude with a discussion on how the bounds can be effectively applied to AR-RLNC.

A major advantage of RLNC and AR-RLNC is that acknowledgments are sent once all packets (or all packets in a group - for AR-RLNC) are received, rather than for each individual packet. This is especially important in high-frequency wireless communications, where the cost of coordinating between the transmitter (TX) and receiver (RX) is considerably high \cite{dahhani2019association}.

For simplicity, we assume that the data sent from source $s$
to destination $d$ consists of $N$ packets of the same size, i.e., $\{P_1,\dots,P_N\}$. At a $t$-th time step, $s$ transmits a coded packet $E_t$. The receiver $d$ may acknowledge $s$ by sending an acknowledgment (ACK). Each encoded packet $E_t$ is a random linear combination of the original uncoded packets,
\begin{align}
    E_t = \sum_{j=1}^{N}\rho_{t,j}P_j, \notag
\end{align}
where $\rho_{t,j},$ is a coefficient defined by $ t \in \{1,2,\dots\}, j \in \{1,\dots,N\}$ randomly sampled. Ones $N$ coded packets received, $s$ can decode the original packets.

This approach imposes a large latency on the system, as for decoding the first packet, at least $N$ coded packets should arrive. The AR-RLNC addresses this issue by splitting the packets $\{P_1,\dots,P_N\}$, into $\mu$ groups, each group containing $n$ packets, that is, $\{P_1,\dots,P_n\},\{P_{n+1},\dots,P_{2n}\},\dots,\{P_{(\mu-1)n+1},\dots,P_{\mu n}\}$. Notice that $\mu \cdot n = N$ and when $\mu\cdot n \neq N$ a zero padding technique can be applied. The packets in each group $I = \{p_{(i-1)n + 1},\dots,p_{in}\},i \in \{1,\dots,\mu\}$ are encoded similarly to RLNC and transmitted in order. Let $E_t^i$ be the $I$ group encoded in time $t$ then
\setlength{\belowdisplayskip}{4pt}
\begin{align}
    E_t^i = \sum^{n}_{j=1}\rho^i_{t,j}P_{(i-1)n + j}. \notag
\end{align}
In this way $s$ can recover the whole group per receipt of the $n$ encoded packet, $n << N$. If $s$ does not receive an acknowledgment showing that the coded group $I$ received, at the end of the $n$-th transmission, it starts sending other $n$ coded packets of $I$ (with different coefficients).

Let $l$ denote the length of an encoded packet $E_t^i$, and let $\delta$ represent the associated metadata of the packet. In general, each group of packages requires $(l+\delta)n$ bits.

In-order delivery delay and throughput are interdependent. The higher the total amount of information delivered, the longer the delay. This concept is broadly explored in  \cite{cohen2020adaptiveS,cohen2020adaptive}. The AR-RLNC network coding offers a mitigation of the delay-throughput tradeoff. We propose a further improvement by selecting the parameter $\mu$ such that $(l+\delta)n$ equals $\eta^{\max}$, $\eta^{\min}$, or $\eta^{\text{mean}}$. And even more effective mitigation by setting $(l+\delta)n$ to $\eta^{\max}_{\text{stable}}$, or $\eta^{\min}_{\text{stable}}$. The latter approach yields superior performance, as it allows for targeting a fixed number of links and better represents the current network conditions. 

When $\eta^{\text{max}}$ or $\eta^{\max}_{\text{stable}}$ are selected, $n$ tends to be large; the transfer is rapid, packet loss is high, and in-order delivery delay is long. Conversely, choosing $\eta^{\text{min}}$ or $\eta^{\min}_{\text{stable}}$ results in a smaller $n$; the transfer is slow, packet loss is low, and in-order delivery delay is short.

In RLNC and AR-RLNC, feedback can be interpreted as a stopping time signal, that is, once the packets are successfully decoded, there is no need to transmit further linear combinations of those same packets. Until the sender receives an acknowledgment, it continues to transmit information. Since there is no explicit termination signal (i.e., no designated end-of-packet indicator), this coding scheme falls into the category of variable-length feedback (VLF) codes, as defined by Polyanskiy et al. in \cite{polyanskiy2011feedback}. 

\section{Conclusions and future work} \label{ConclusionsSection}

In this work, we characterized network throughput across three scenarios, best case ($\eta^{\text{max}}$), worst case ($\eta^{\text{min}}$), and average case ($\eta^{\text{mean}}$). Our analysis demonstrated that increasing the number of transmitting links can reduce throughput variability by nearly $90\%$. We introduced the concept of stability within the min-cut max-flow framework under link throughput variability and showed that an unstable network can yield up to $O(2^{|E|})$ distinct min-cut max-flow configurations. To address this, we proposed an efficient stability-forcing algorithm with time complexity $O(|E|^2 + |V|)$, which ensures that the maximum achievable throughput converges to the average in the worst case, i.e. $\eta^{\text{max}} \rightarrow \eta^{\text{mean}}$. Finally, we presented an adaptive rateless random linear network coding (AR-RLNC) scheme that effectively approaches the theoretical bounds of network throughput.

Future work includes deriving the converse limits of the model. Another direction involves extending the unicast results to the multicast setting and applying the adaptive causal network coding with feedback (AC-RLNC) \cite{cohen2020adaptiveS,cohen2020adaptive} to the variable network model. This involves analyzing the resulting throughput and delay performance bounds, as well as evaluating the effectiveness of the approach through simulation.

\bibliographystyle{IEEEtran}
\bibliography{ref_short.bib}

\appendices

\section{Further Details on the Example in Fig.~\ref{fig:cut_rate}} \label{appendix:exmple}

Consider $\text{RTT} = 4$ and compare the throughput variation between the case of 3 links and the case of 48 links. The target transmission rate is $2.4$ bits per second. For 3 links, each link must achieve a success rate of $1 - \epsilon = 0.8$. For 48 links, each link requires a success rate of $1 - \epsilon = 0.05$ to maintain the same total throughput.

\underline{\textbf{Case A}}. For 3 links and $\epsilon = 0.2$:
\begin{align*}
\eta^{\max} &\approx 2.4 + \frac{\sqrt{3 \cdot 4 \cdot 0.2 \cdot 0.8}}{3 \cdot 4} \approx 2.5155, \\
\eta^{\min} &\approx 2.4 - \frac{\sqrt{3 \cdot 4 \cdot 0.2 \cdot 0.8}}{3 \cdot 4} \approx 2.2845.
\end{align*}
Thus, the variability range is approximately $0.2308$.

\underline{\textbf{Case B}}. For 48 links and $\epsilon = 0.95$:
\begin{align*}
\eta^{\max} &\approx 2.4 + \frac{\sqrt{48 \cdot 4 \cdot 0.95 \cdot 0.05}}{48 \cdot 4} \approx 2.4157, \\
\eta^{\min} &\approx 2.4 - \frac{\sqrt{48 \cdot 4 \cdot 0.95 \cdot 0.05}}{48 \cdot 4} \approx 2.3843.
\end{align*}
The variability range is approximately $0.0314$.

As a result, the throughput variability is significantly reduced, by nearly $90\%$, when increasing the number of links from 3 to 48. This illustrates how increasing the number of links leads to more stable and predictable performance.

\end{document}